\documentclass[aps,pra,twocolumn,superscriptaddress,floatfix,nofootinbib,showpacs,longbibliography]{revtex4-2}
\usepackage[utf8]{inputenc}  
\usepackage[T1]{fontenc}     
\usepackage[british]{babel}  
\usepackage[sc,osf]{mathpazo}\linespread{1.05}  
\usepackage[table]{xcolor}
\usepackage[scaled=0.86]{berasans}  
\usepackage[colorlinks=true, citecolor=blue, urlcolor=blue]{hyperref}
\makeatletter
\newcommand{\setword}[2]{%
  \phantomsection
  #1\def\@currentlabel{\unexpanded{#1}}\label{#2}%
}
\makeatother
\usepackage{graphicx} 
\usepackage[babel]{microtype}
\usepackage{blkarray}
\usepackage{amsmath}

\usepackage{amsmath,amssymb,amsthm,bm,amsfonts,mathrsfs,bbm} 

\usepackage{xspace}  
\usepackage{pgf,tikz}
\usepackage{xcolor}
\usepackage{multirow}
\usepackage{array}
\usepackage{bigstrut}
\usepackage{braket}
\usepackage{color}
\usepackage{natbib}

\usepackage{multirow}
\usepackage{mathtools}
\usepackage{float}
\usepackage[caption = false]{subfig}
\usepackage{xcolor,colortbl}
\usepackage{color}

\newcommand{\be}{\begin{equation}}
\newcommand{\ee}{\end{equation}}
\newcommand{\ba}{\begin{eqnarray}}
\newcommand{\ea}{\end{eqnarray}}

\newcommand{\tr}{\operatorname{Tr}}

\newtheorem{theorem}{Theorem}

\newtheorem{definition}{Definition}

\newtheorem{observation}{Observation}

\newtheorem{remark}{Remark}
\newtheorem{lemma}{Lemma}






\def\>{\rangle}
\def\<{\langle}






\usepackage{stmaryrd}


\usepackage{centernot}
\usepackage{subfig}

\usepackage{diagbox}
\usepackage{multirow}
\usepackage{tabularx}

\begin{document}

\title{Gottesman–Knill Limit on One-way Communication Complexity: Tracing the Quantum Advantage down to Magic Resources}

\author{Snehasish Roy Chowdhury}
\affiliation{Physics and Applied Mathematics Unit, 203 B.T. Road Indian Statistical Institute Kolkata, 700 108, India}

\author{Sahil Gopalkrishna Naik}
\affiliation{S. N. Bose National Centre for Basic Sciences, Block JD, Sector III, Salt Lake, Kolkata 700 106, India}

\author{Ananya Chakraborty}
\affiliation{S. N. Bose National Centre for Basic Sciences, Block JD, Sector III, Salt Lake, Kolkata 700 106, India}

\author{Ram Krishna Patra}
\affiliation{S. N. Bose National Centre for Basic Sciences, Block JD, Sector III, Salt Lake, Kolkata 700 106, India}
\affiliation{HUN-REN Institute for Nuclear Research, P.O. Box 51, H-4001 Debrecen, Hungary.}

\author{Subhendu B. Ghosh}
\affiliation{S. N. Bose National Centre for Basic Sciences, Block JD, Sector III, Salt Lake, Kolkata 700 106, India}

\author{Pratik Ghosal}
\affiliation{S. N. Bose National Centre for Basic Sciences, Block JD, Sector III, Salt Lake, Kolkata 700 106, India}
\affiliation{Harish-Chandra Research Institute, A CI of Homi Bhabha National Institute, Chhatnag Road, Jhunsi, Allahabad - 211019, India.}

\author{Manik Banik}
\affiliation{S. N. Bose National Centre for Basic Sciences, Block JD, Sector III, Salt Lake, Kolkata 700 106, India}

\author{Ananda G. Maity}
\affiliation{S. N. Bose National Centre for Basic Sciences, Block JD, Sector III, Salt Lake, Kolkata 700 106, India}
\affiliation{Networked Quantum Devices Unit, Okinawa Institute of Science and Technology Graduate University, Onna-son, Okinawa 904-0495, Japan}
\affiliation{School of Physical Sciences, Indian Institute of Technology Goa, Ponda 403401, Goa, India}
	
\begin{abstract}
Quantum systems are known to offer advantages over their classical counterpart in communication complexity protocols, where the aim is to minimize amount of information exchange between distant parties to compute global functions of their distributed inputs. In this work, we establish that any one-way communication protocol implemented using a prime-dimensional quantum system---restricted to stabilizer-state encodings and Clifford-operation decodings---can be exactly simulated by transmitting a classical system of the same dimension, given access to shared randomness between the sender and receiver. In direct analogy with the Gottesman–Knill theorem, which attributes quantum computational speedup to non-stabilizer resources, commonly known as the magic resources, our result identifies the same non-stabilizer resources as the essential ingredient for the quantum advantage in one-way communication complexity. Furthermore, we present explicit tasks where even a `minimal magic resource' suffices to achieve a provable quantum advantage, highlighting its efficient use in communication protocols.
\end{abstract}

\maketitle	

\section{Introduction} The Second Quantum Revolution leverages non-classical features of quantum systems to devise technologies that surpass classical limitations \cite{Dowling2003,Deutsch20,Aspect2023}. Identifying the resources responsible for quantum advantage is crucial both for foundational insights and for guiding practical implementations. A striking illustration comes from the seminal Gottesman-Knill theorem, which shows that quantum circuits restricted to stabilizer state preparations and Clifford operations---despite creating large entanglement and superposition---can be efficiently simulated classically \cite{Gottesman1998}. However, inclusion of a single non-stabilizer resource, such as the T-gate, renders the system universal and confers quantum advantage, thereby identifying non-stabilizer resources as element of \textit{magic} in quantum computation \cite{Bravyi2005,Knill2005,Campbell2012,Bravyi2016,Howard2017,Bravyi19}. In this letter, we demonstrate that the same magic resource underpins the quantum advantage in communication complexity tasks. Specifically, for any prime dimension $d$, we prove that every one-way quantum protocol confined to stabilizer preparations and Clifford measurements always admits an exact simulation by a classical 
$d$-level system assisted with shared randomness (SR). Hence, non-stabilizer resources are necessary to exceed classical communication bounds.

\noindent Communication complexity, introduced by Yao \cite{Yao1979}, investigates the minimal information exchange required between distant parties to compute a joint function of their inputs (see also \cite{Yao1982,Kushilevitz1996,Kushilevitz1997}). In its simplest form, two parties, Alice and Bob, aim to compute a global function of their respective private inputs $x$ and $y$, drawn from finite alphabets $\mathcal{X}$ and $\mathcal{Y}$. To achieve this, the parties may exchange physical systems, subject to communication constraints, and are often allowed to access pre-shared classical correlation (also called SR) \cite{Ambainis2009,Bavarian2014,Canonne2015}. Quantum communication complexity generalizes this framework by introducing quantum resources. In Yao's quantum model \cite{Yao1993}, Alice is allowed to communicate quantum particles to Bob; whereas in the entanglement-assisted model of Cleve and Buhrman \cite{Cleve1997}, the parties share entanglement and communicate classically, gaining an advantage from the non-local correlations of the shared state \cite{Buhrman2010}. Variants like Random Access Codes (RACs) \cite{Wisener83,Ambainis2002} explore scenarios where quantum communication outperforms classical strategies for fixed communication costs. This quantum advantage is particularly striking in light of Holevo's no-go result \cite{Holevo1973}, which asserts that in standard Shannon-type communication scenarios, where no input is present at the receiver's end \cite{Shannon1948}, quantum communication offers no advantage over classical communication. More recently, Frenkel and Weiner extended this result by showing that, in the presence of SR, any input-output correlation achievable with a $d$-level quantum system can be reproduced using a $d$-level classical system \cite{Frenkel2015}.

\noindent These results naturally prompt the question: which specific quantum resources enable the observed advantage in communication complexity tasks? While previous studies partially answer this question highlighting the roles of coherent superposition in encoding states and incompatibility in decoding measurements \cite{Brukner04,Tavakoli2020,Carmeli2020,Guha2021,Gupta2024,Patra2024,Chakraborty2025,Chakraborty2025-1,Patra2026}, here we establish that the advantage fundamentally stems from non-stabilizer resources. Specifically, we show that in any one-way communication complexity scenario, when encodings are limited to stabilizer states and decodings to Clifford operations, all input-output correlations achievable with a prime-dimensional quantum system can always be reproduced by a classical system of the same dimension, given access to SR between the sender and receiver. Thus, the inclusion of magic resources, such as T-states or T-gates, is essential for surpassing classical limits. To this end, we also identify tasks that exhibit quantum advantage with minimal magic resource. We also identify the minimum amount of magic resources necessary to obtain an exponential advantage in communication complexity via a quantum strategy compared to its classical counterpart. 

\noindent The paper is organized as follows. In Sec.~\ref{sec2}, we briefly review the necessary preliminaries, including an overview of the stabilizer framework in Sec.~\ref{subsec2a} and one-way communication complexity in Sec.~\ref{subsec2b}. In Sec.~\ref{sec3}, we present our main results. In Sec.~\ref{subsec3a}, we prove that magic resources are necessary for any quantum advantage in one-way communication complexity. In Sec.~\ref{subsec3b}, we identify tasks for which a minimal amount of magic suffices to obtain a quantum advantage. In Sec.~\ref{subsec3c}, we establish conditions for robust exponential quantum advantages and determine the minimal magic required for quantum protocols to achieve exponentially lower communication costs than the best classical strategies. Finally, in Sec.~\ref{sec4}, we discuss broader implications of our results and outline directions for future research.
\section{Preliminaries}\label{sec2}
\subsection{Stabilizer framework}\label{subsec2a} Here we briefly revisit the stabilizer framework and for more detailed discussions we refer to the Refs.~\cite{Gottesman2002,Gross2006,Veitch2012,Veitch14}. For a general prime dimension $d$, the shift and phase operators are respectively given by: \(X \ket{j}= \ket{j\oplus_d1},~Z \ket{j} = \omega^j \ket{j}\), where \(\omega = \exp\left(2\pi \iota/d\right)\) with \(\iota=\sqrt{-1},~j\in\mathbb{Z}_d\), and \(\oplus_d\) denotes addition modulo \(d\). The generalized Pauli (Heisenberg-Weyl) operators are given by,
\begin{align}
\mathrm{P}_{(a_1, a_2)} = \begin{cases}
\iota^{a_1 a_2} ~X^{a_1} Z^{a_2};~a_1, a_2 \in \mathbb{Z}_2, \\
X^{a_1} Z^{a_2};~ a_1, a_2 \in \mathbb{Z}_d,~ d > 2.
\end{cases}   
\end{align}
The generalized Pauli group is comprised of the generalized Pauli operators along with a set of global phases: \(\mathcal{G}_2=\{\iota^{k} ~\mathrm{P}_{(a_1, a_2)}~|~  k\in \mathbb{Z}_4~\&~a_1,a_2 \in\mathbb{Z}_2\}\), and \(\mathcal{G}_d=\{\omega^{k} ~\mathrm{P}_{(a_1, a_2)}~|~  k,a_1,a_2 \in\mathbb{Z}_d\}\) for \(d>2\). The Clifford group \(\mathcal{C}_d\) consists of the unitaries under which the group \(\mathcal{G}_d\) remains closed, i.e. \(U \in \mathcal{C}_d\Leftrightarrow\forall~(a_1,a_2)~ \exists~(a'_1,a'_2)~\&~\phi~\text{s.t.}~
U~\mathrm{P}_{(a_1,a_2)} U^\dagger = \exp(\iota\phi) \mathrm{P}_{(a'_1,a'_2)}\). The set of stabilizer states for dimension $d$ are given by
\begin{align}
\mathrm{St}_d:=\text{ConvHull}\left\{U \ket{0}\bra{0}U^\dagger~|~U \in \mathcal{C}_d\right\}\subset\mathcal{D}(\mathbb{C}^d),\label{ext-state}
\end{align}
where \(\mathcal{D}(\cdot)\) is the set of density operators. Clearly, \(\mathrm{St}_d\) forms a polytope with its extreme points being the eigen-projectors of the mutually unbiased basis (MUB) operators \cite{Ivonovic1981}:
\begin{align}
\text{MUB}_d:=\left\{\mathrm{P}_{(0,1)},\mathrm{P}_{(1,0)},\mathrm{P}_{(1,1)},\cdots,\mathrm{P}_{(1,d-1)}\right\}.\label{MUB-d}
\end{align}
All possible combinations of the computational basis state preparations, computational basis measurements, and Clifford rotations form the set of stabilizer operations. In particular, this includes all stabilizer state preparations and measurements.
\begin{figure}[t]
\centering
\includegraphics[width=1\linewidth]{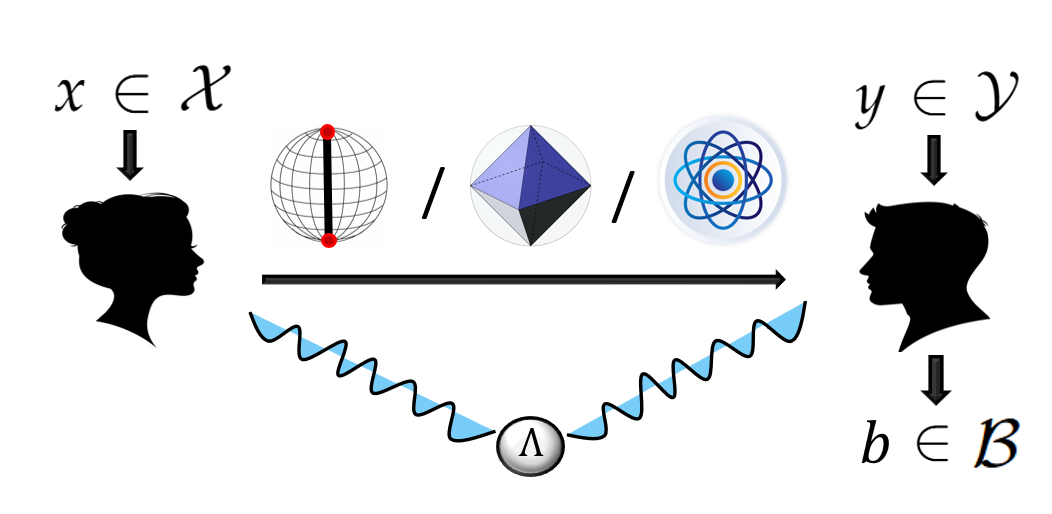}
\caption{One-way communication complexity setup. Given a classical input \(x\in\mathcal{X}\) to Alice and classical input \(y\in\mathcal{Y}\) to Bob, they collaboratively aims to reproduce a conditional probability distribution \(\{p(b|x,y)\}\) with minimizing the amount of communication from Alice to Bob, in assistance with SR \(\lambda\in\Lambda\). The sets \(\mathbb{S}_{\mathrm{K}}^d[\mathcal{X} \to \mathcal{B} \mid \mathcal{Y}]\) for \(\mathrm{K}=\mathrm{C},\mathrm{St},\mathrm{Q}\) respectively denote the correlations obtained through \(d\)-level classical, stabilizer-restricted quantum, and quantum communication from Alice to Bob.}\label{fig1}
\vspace{-.5cm}
\end{figure}

\subsection{One-way communication complexity}\label{subsec2b} Two distant parties, Alice and Bob, receive classical inputs $x \in \mathcal{X}$ and $y \in \mathcal{Y}$, respectively, and one of them (say Bob) aims to compute a function $f:\mathcal{X} \times \mathcal{Y} \to \mathcal{B}$ \cite{Yao1979}. A more general task involves simulating the conditional distribution $p({\mathcal{B}|\mathcal{XY}})\equiv\{p(b|x,y)\}$ while minimizing the amount of communication from Alice to Bob, possibly assisted by SR $\lambda \in \Lambda$ (see Fig.\ref{fig1}). In classical setting, their protocol goes as follow: Alice sends a message $m(x,\lambda)$ to Bob based on her input and the shared variable, whereas Bob produces an output $b = b(m,y,\lambda)$ that depends on the message received from Alice, his input \(y\), and the shared variable \(\lambda\). In the quantum case, Alice instead communicates a quantum state $\rho_{(x,\lambda)} \in \mathcal{D}(\mathbb{C}^d)$, which Bob measures using a POVM $M^y\equiv\{\pi^{k|y}\in\mathcal{E}(\mathbb{C}^d)~|~ \sum_k \pi^{k|y} = \mathbf{I}_d\}$, where \(\mathcal{E}(\cdot)\) is the set of effect operators and \(\mathbf{I}_d\) denotes the d-dimensional identity operator. He then computes the output $b = b(k,y,\lambda)$ based on the measurement outcome $k$, his input \(y\), and $\lambda$. The Canonical RAC tasks \cite{Wisener83,Ambainis2002} and their variants \cite{Spekkens2009,Banik2015,Perry2015,Tavakoli2015,Ambainis2019} demonstrate that quantum communication can outperform classical strategies in such one-way communication complexity tasks. In remainder of the paper we investigate the role of non-stabilizer resources in communication complexity.

\section{Results}\label{sec3}
\subsection{The Set up, Notations, and Definitions}
We denote by $\mathbb{S}_{\mathrm{C}}^d[\mathcal{X} \to \mathcal{B} \mid \mathcal{Y}]$, the set of input-output correlations $p(\mathcal{B} \mid \mathcal{X}\mathcal{Y})$ achievable via communicating a $d$-level classical system from Alice to Bob, while further assisted with SR. This set is convex for finite $\mathcal{X},~\mathcal{Y},$ and $\mathcal{B}$; and we reserve the notation $\mathbb{S}_{\mathrm{C}}^d[\mathcal{X} \to \mathcal{B}]$ for trivial $\mathcal{Y}$ scenario, i.e., when $|\mathcal{Y}| = 1$. The quantum counterpart, denoted $\mathbb{S}_{\mathrm{Q}}^d[\mathcal{X} \to \mathcal{B} \mid \mathcal{Y}]$, comprises correlations achievable through transmission of a $d$-dimensional quantum system. In RAC tasks, where Alice holds a random dit string ${\bf x}= x_1 \cdots x_n \in \mathcal{X}=\{0,1,\cdots d-1\}^n$, and Bob, upon receiving input $y \in \mathcal{Y} = \{1,\dots,n\}$, must guess $b = x_y \in \mathcal{B}=\{0,1,\cdots d-1\}$, quantum protocols are known to outperform classical ones \cite{Ambainis2002, Tavakoli2015}, thereby establishing the strict set inclusion $\mathbb{S}_{\mathrm{C}}^d[\mathcal{X} \to \mathcal{B} \mid \mathcal{Y}] \subsetneq \mathbb{S}_{\mathrm{Q}}^d[\mathcal{X} \to \mathcal{B} \mid \mathcal{Y}]$ for all $d \ge 2$. Apart from classical and quantum strategies, one can also define the class of Clifford-limited (or stabilizer-limited) strategies for one-way communication complexity tasks. 
\begin{definition}\label{def1}
[Stabilizer-limited strategy] A $d$-level stabilizer-limited strategy for one-way communication complexity restricts Alice's encoding to stabilizer states $\mathrm{St}_d$, and Bob’s decodings to stabilizer-preserving operations. 
\end{definition}
\noindent The classical, quantum, and stabilizer-limited strategies employed by Alice and Bob can further be categorized into three types: (i) pure (or extremal) strategies, where Alice and Bob follow definite encodings and decodings independently; (ii) mixed strategies, which invokes local randomness over pure strategies; and (iii) shared strategies, that further invokes SR between Alice and Bob, ensuring the respective sets of strategies and consequently the respective sets of simulable correlations to be convex. Before giving the formal definitions of theses strategies, we recall some features of general quantum measurement and the definition of extremal measurements.

\noindent  General quantum measurement, formally known as positive operator-valued measure (POVM) \cite{Kraus1983}, is defined as a collection of positive operators adding up-to the identity, i.e. \(M \equiv \{\pi_k~|~\pi_k\in\mathcal{E}(\mathbb{C}^d)~\&~\sum_k\pi_k=\mathbf{I}_d\}_{k=1}^l\). Set of effects forms a convex set, and the set of unnormalized effects forms a convex cone \(\mathcal{P}(\mathbb{C}^d)\subset\mathcal{L}(\mathbb{C}^d)\). An effect is said to be {\it ray extremal} if it can not be expressed as conical combination of other rays, i.e. \(\pi^{\text{ray-ext}}=p_1\pi^1+p_2\pi^2\) with \(p_1,p_2\ge0\) implies either \(\pi^1=\pi^2=\pi^{\text{ray-ext}}\) or one of the \(p_i\)'s is zero. Note that, all the extreme effects are not ray extremal. For instance, \(\mathbf{I}_d\) is extremal in the set of effects but is not ray extremal. In fact, ray extremal effects are exactly the rank-$1$ effects. One can further introduce the notion of extremal measurements. 
\begin{definition}\label{def2}
[Extremal POVM, Rank-$1$ extremal POVM] A $k$ outcome POVM \(\mathrm{M}\equiv\{\pi_a\}_{a=1}^k\) is called an extremal POVM if all the non-zero effects are extremal and linearly independent. Furthermore, it is called a rank-$1$ extremal POVM if all the non-zero effects are rank-1 (i.e. ray extremal) and linearly independent (Section {\bf 2.3.3} Ref.\cite{Watrous2018}). 
\end{definition}
\noindent Note that, all extremal POVMs are not rank-$1$ extremal. For instance, consider the qutrit measurement \(M_{0|12}=\{\ket{0}\bra{0},\ket{1}\bra{1}+\ket{2}\bra{2}\}\), which is an extremal POVM, but not rank-$1$ extremal. With this we can now formally define pure, mixed, and shared strategies for a generic one-way communication complexity task employed within classical, quantum, and stabilizer-limited scenarios. 
\begin{definition}{\label{def3}}
[Pure Strategies]

\noindent {\bf (a)} [Classical pure strategies] A \(d\)-level pure classical encoding-decoding strategy is an ordered tuple \((\mathbb{E}, D_1,\cdots, D_{|\mathcal{Y}|})\) that consists of an encoding function \(\mathbb{E}:\mathcal{X}\to\{0,1,\cdots,d-1\}\) and \(|\mathcal{Y}|\) decoding functions \(D_i:\{0,1,\cdots,d-1\}\to\mathcal{B}\) \cite{Ambainis2009}. 

\noindent {\bf (b)} [Quantum pure strategies] A \(d\)-level pure quantum encoding-decoding strategy is an ordered tuple \((\mathbb{P}_{\mathrm{Q}}, M^{\mathrm{Q}}_1,\cdots, M^{\mathrm{Q}}_{|\mathcal{Y}|},D_1,\cdots,D_{|\mathcal{Y}|})\) that consists of pure quantum state preparations \(\mathbb{P}_{\mathrm{Q}}:\mathcal{X}\to\{\ket{\psi_x}\}\subset\mathbb{C}^d\), \(|\mathcal{Y}|\) decoding measurements \(M^{\mathrm{Q}}_i:\{\ket{\psi_x}\}\to\mathcal{K}_i\), and \(|\mathcal{Y}|\) classical post-processing functions \(D_i:\mathcal{K}_i\to\mathcal{B}\), where \(M^{\mathrm{Q}}_i\)'s are quantum extremal POVMs (Definition \ref{def2}) with \(\mathcal{K}_i\) non-zero effects.

\noindent {\bf (c)} [Stabilizer-restricted pure strategies] A \(d\)-level pure stabilizer-restricted encoding-decoding strategy is an ordered tuple \((\mathbb{P}_{\mathrm{St}}, M^{\mathrm{St}}_1,\cdots, M^{\mathrm{St}}_{|\mathcal{Y}|},D_1,\cdots,D_{|\mathcal{Y}|})\) that consists of pure stabilizer state preparations \(\mathbb{P}_{\mathrm{St}}:\mathcal{X}\to\{\ket{\psi_x}\}\subset\mathrm{St}_d\), \(|\mathcal{Y}|\) decoding measurements \(M^{\mathrm{St}}_i:\{\ket{\psi_x}\}\to\mathcal{K}_i\), and \(|\mathcal{Y}|\) classical post-processing functions \(D_i:\mathcal{K}_i\to\mathcal{B}\), where \(M^{\mathrm{St}}_i\)'s are stabilizer preserving extremal POVMs with \(\mathcal{K}_i\) non-zero effects.
\end{definition}
\begin{definition}\label{def4}
[Mixed $\&$ Shared strategies] Such strategies invoke random variables \((\lambda_A,\lambda_B)\in\Lambda_A\times\Lambda_B\), sampled according to a probability distribution \(p(\lambda_A,\lambda_B)\), based on which Alice and Bob respectively randomize their pure encoding and decoding strategies. Whenever, the distribution is in product form, i.e. \(p(\lambda_A,\lambda_B)=p(\lambda_A)p(\lambda_B)\) it corresponds to a mixed strategy, else it is called a shared strategy.   
\end{definition} 

\noindent With these definitions in place, we now highlight a key observation specific to the extremal stabilizer-limited strategies.
\begin{observation}\label{obs1}
In an extremal strategy of a $d$-level stabilizer-limited protocol, Alice's encodings are restricted to the set $\mathrm{St}^{\mathrm{ext}}_d$ of $d(d+1)$ pure stabilizer states, while Bob's decodings are restricted to the $d+1$ sharp (projective) MUB measurements.
\end{observation}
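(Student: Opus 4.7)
The plan is to characterize separately the extreme points of (i) Alice's stabilizer encoding maps and (ii) Bob's stabilizer decoding strategies, and then combine them to obtain the extreme points of the full stabilizer-limited strategy set.

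For Alice, her encoding is a tuple $(\rho_x)_{x\in\mathcal{X}}$ with each $\rho_x\in\mathrm{St}_d$, which lives in the product polytope $\mathrm{St}_d^{|\mathcal{X}|}$. A standard fact of convex geometry is that the extreme points of a finite product of polytopes are products of the extreme points of each factor. Hence in an extremal strategy each $\rho_x$ must itself be extremal in $\mathrm{St}_d$, and by Eq.~(\ref{ext-state}) together with the remark following it these are precisely the eigenprojectors of the $d+1$ MUB operators in Eq.~(\ref{MUB-d}), yielding the $d(d+1)$ pure states constituting $\mathrm{St}^{\mathrm{ext}}_d$.

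For Bob, the goal is to show that his extremal decoding POVMs (for each input $y$) are exactly the $d+1$ rank-one projective MUB measurements, and the argument proceeds in three steps. First, an extremal stabilizer POVM $M^y=\{\pi^{k|y}\}$ must consist only of rank-one stabilizer projectors; if some effect $\pi^{k|y}$ had higher rank, it would decompose nontrivially into rank-one stabilizer projectors, and this decomposition could be distributed across outcomes to yield a nontrivial convex decomposition of $M^y$ that preserves the normalization $\sum_k\pi^{k|y}=\mathbf{I}_d$. Second, a collection of rank-one stabilizer projectors summing to $\mathbf{I}_d$ must be pairwise orthogonal, and since projectors from two distinct MUBs have overlap $1/d\neq 0$, all $d$ outcomes must come from a single MUB basis. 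Third, each of the $d+1$ MUB bases manifestly provides such an orthogonal resolution of identity, recovering exactly the $d+1$ projective MUB measurements. Bob's classical post-processing $b=b(k,y,\lambda)$ is then a stochastic map whose extreme points are deterministic functions, which adds no further subtlety.

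The main obstacle I anticipate lies in the first step above: explicitly constructing a convex decomposition of a stabilizer POVM containing a non-rank-one effect while respecting the sum-to-identity constraint. The resolution will exploit that any stabilizer effect decomposes as a nonnegative combination of rank-one stabilizer projectors (these being the extreme points of the stabilizer effect polytope modulo trivial effects) and that such decompositions can be paired consistently across outcomes so that the overall mixture remains a valid combination of stabilizer POVMs with the correct normalization. Once these pieces are in place, the Observation follows by combining the extremal encodings, the extremal MUB decodings, and the deterministic post-processings.
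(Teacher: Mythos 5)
There is a genuine gap in your treatment of Bob's decodings, and it is precisely the subtlety the paper flags in its Appendix B. Your first step claims that an extremal stabilizer POVM must consist only of rank-one effects, arguing that a higher-rank effect could be split into rank-one stabilizer projectors and ``distributed across outcomes'' to give a nontrivial convex decomposition. This is false: the paper's own counterexample $M_{0|12}=\{\ket{0}\bra{0},\,\ket{1}\bra{1}+\ket{2}\bra{2}\}$ is an extremal (and stabilizer-preserving) qutrit POVM with a rank-two effect. Your proposed decomposition fails because writing $\ket{1}\bra{1}+\ket{2}\bra{2}$ as a \emph{convex} combination of effects assigned to a single outcome would require rescaling the rank-one pieces by a factor exceeding $1$, taking them outside the effect set; since $\ket{0}\bra{0}$ is an extremal effect, the complementary effect is then forced, and $M_{0|12}$ admits no nontrivial convex decomposition. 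So extremality of the POVM does not imply rank-one effects, and this is exactly why the paper introduces the separate notion of rank-$1$ extremal POVMs in Definition~\ref{def2}.

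The conclusion can still be reached, but by a different route: the effects of an extremal POVM are extremal effects, i.e.\ projectors; projectors summing to $\mathbf{I}_d$ are mutually orthogonal, so the measurement is a PVM; any rank-one stabilizer effect appearing in it fixes an MUB basis, and its orthocomplement is then the sum of the remaining eigenprojectors of that same basis. Hence every extremal stabilizer-preserving POVM is a \emph{coarse-graining} of one of the $d+1$ sharp MUB measurements, and the coarse-graining is absorbed into Bob's classical post-processing maps $D_i$ of Definition~\ref{def3}. This is also how the Observation must be read: ``restricted to the $d+1$ MUB measurements'' holds up to outcome relabelling/merging, not literally at the level of the POVM. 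Your treatment of Alice's side (extreme points of the product polytope $\mathrm{St}_d^{|\mathcal{X}|}$ are tuples of extreme points of $\mathrm{St}_d$, i.e.\ MUB eigenprojectors) is fine and matches the paper, which otherwise treats the Observation as an essentially definitional consequence of Appendix~B rather than giving a detailed proof.
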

\subsection{Magic Resources are Necessary for Quantum Advantage}\label{subsec3a}
\noindent Denoting the set of input-output correlations obtained through \(d\)-level stabilizer limited strategies along with SR as \(\mathbb{S}_{\mathrm{St}}^d[\mathcal{X} \to \mathcal{B} \mid \mathcal{Y}]\), we are now in a position to state one of the main contributions of the present work.  
\begin{theorem}\label{theo1}
For all the finite cardinality sets \(\mathcal{X}\), \( \mathcal{Y} \) and \(\mathcal{B}\) and for prime \(d\), we have \(\mathbb{S}_{\mathrm{St}}^d[\mathcal{X}\to\mathcal{B}|\mathcal{Y}]=\mathbb{S}_{\mathrm{C}}^d[\mathcal{X}\to\mathcal{B}|\mathcal{Y}]\).  
\end{theorem}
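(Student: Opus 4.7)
The plan is to prove the two inclusions separately. The direction $\mathbb{S}_{\mathrm{C}}^d[\mathcal{X}\to\mathcal{B}|\mathcal{Y}]\subseteq\mathbb{S}_{\mathrm{St}}^d[\mathcal{X}\to\mathcal{B}|\mathcal{Y}]$ is immediate, since any classical $d$-level protocol is faithfully reproduced by letting Alice prepare a computational-basis state $\ket{m}$ and Bob perform the computational-basis measurement, both of which are stabilizer operations. The real content lies in the reverse inclusion, and since shared randomness makes both sets convex, it suffices to show that every \emph{extremal} stabilizer-limited correlation already belongs to $\mathbb{S}_{\mathrm{C}}^d[\mathcal{X}\to\mathcal{B}|\mathcal{Y}]$.

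To handle the extremal case, I would invoke Observation~\ref{obs1} to parametrize such a strategy by the following data: functions $i:\mathcal{X}\to\{0,1,\dots,d\}$ and $j:\mathcal{X}\to\mathbb{Z}_d$ encoding Alice's pure stabilizer state as the $j(x)$-th eigenstate of the $i(x)$-th MUB operator from~\eqref{MUB-d}; a function $k:\mathcal{Y}\to\{0,1,\dots,d\}$ specifying Bob's choice of sharp MUB measurement; and a classical output rule $b(\text{out},y)$. The key structural fact — which leverages the primality of $d$ via the existence of exactly $d+1$ MUBs partitioning the $d(d+1)$ pure stabilizer states — is the dichotomous form of the Born probabilities between MUB eigenstates: the measured outcome equals $j(x)$ deterministically when $i(x)=k(y)$, and is uniformly distributed on $\mathbb{Z}_d$ otherwise.

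This dichotomy admits a one-time-pad classical simulation. Distribute independent uniform shared-randomness variables $u_0,u_1,\dots,u_d\in\mathbb{Z}_d$, one per MUB. Alice transmits the $d$-valued symbol $m=j(x)+u_{i(x)}\pmod{d}$; Bob then computes $\text{out}'=m-u_{k(y)}\pmod{d}$ and outputs $b(\text{out}',y)$. If $i(x)=k(y)$ the masks cancel and $\text{out}'=j(x)$, while if $i(x)\neq k(y)$ the difference $u_{i(x)}-u_{k(y)}$ is uniform on $\mathbb{Z}_d$, making $\text{out}'$ uniform as required; in both cases the simulated distribution matches the quantum one exactly. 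The step I expect to be most delicate is the rigorous justification of Observation~\ref{obs1} — in particular, arguing that the only extremal stabilizer decodings are the $d+1$ sharp MUB measurements — which rests on the polytope structure~\eqref{ext-state} together with the primality of $d$; once that is in place, the one-time-pad construction above closes the argument.
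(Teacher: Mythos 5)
Your proposal is correct and follows essentially the same route as the paper: reduce to extremal stabilizer strategies by convexity, exploit the deterministic-versus-uniform dichotomy of overlaps between MUB eigenstates (Eq.~\eqref{apb2} in the paper), and simulate classically with one uniform shared dit per MUB used as a one-time pad, which is exactly the paper's protocol $m=j\oplus_d(d-\lambda_k)$, $b=m\oplus_d\lambda_t$ up to a sign convention. Your explicit treatment of the easy inclusion and your flagging of Observation~\ref{obs1} as the delicate step are reasonable refinements, but they do not change the substance of the argument.
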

\begin{proof}
Any correlation in $\mathbb{S}_{\mathrm{St}}^d[\mathcal{X}\to\mathcal{B}|\mathcal{Y}]$ is either achieved by an extreme stabilizer strategy or by a shared strategy. Since, a shared strategy employs different extreme strategies based on some classical random variable shared between Alice and Bob (see Appendix for formal definition), to prove the theorem, it thus suffices to consider classical simulation of correlations achieved by extreme strategies only.   

In an extreme strategy, for each input $x \in \mathcal{X}$, Alice prepares and sends a pure stabilizer state $\psi_x\in \mathrm{St}^{\mathrm{ext}}_d$. Two types of encoding strategies arise--- (i) Uniform encoding: Alice uses a fixed stabilizer state $\psi_x$ for all $x \in \mathcal{X}$. This strategy clearly communicates no information about the input and is trivially classically simulable; (ii) Coarse-grained encoding: Alice partitions $\mathcal{X}$ into disjoint subsets \(\{\mathcal{X}_r\}_r\), i.e. \(\mathcal{X}=\cup_r\mathcal{X}_r~\&~\mathcal{X}_r\cap\mathcal{X}_{r'}=\emptyset\) for \(r\neq r'\). She then maps inputs of each subset to a distinct stabilizer state. We consider the maximal case where each state in $\mathrm{St}^{\mathrm{ext}}_d$ is used at least once, namely \(\mathcal{X}\) is partitioned into $d(d+1)$ disjoint subsets. Non-maximal cases with \(r<d(d+1)\) are discussed at the end. Similarly, Bob chooses a measurement from MUB\(_d\) based on his input $y \in \mathcal{Y}$. Again, we identify two cases: (i) Uniform decoding: Bob uses a fixed MUB measurement for all $y\in\mathcal{Y}$. As shown in \cite{Frenkel2015}, such strategies are classically simulable using a $d$-level classical system; (ii) Coarse-grained decoding: Bob partitions $y \in \mathcal{Y}$ into disjoint subsets \(\{\mathcal{Y}_t\}_t\), and performs different MUB measurements for different such subsets. The maximal case considers use of each of the MUB measurements at least once, i.e. $t=d+1$, whereas for non-maximally case \(t<d+1\).  Moreover, since each MUB measurement has $d$ outcomes, we may define an output alphabet $\mathcal{B}'$ with $|\mathcal{B}'| = d$. The original output alphabet $\mathcal{B}$ is connected to $\mathcal{B}'$ via classical postprocessing on Bob’s side.
\begin{small}
\begin{table}[t]
\centering
\footnotesize
\begin{tabular}{c||cc||cc||cc|}
& \multicolumn{2}{l||}{~~~$\mathcal{Y}_1\mapsto\mathrm{P}_{(0,1)}$}    & \multicolumn{2}{l||}{~~~$\mathcal{Y}_2\mapsto\mathrm{P}_{(1,0)}$}    & \multicolumn{2}{l|}{~~~$\mathcal{Y}_3\mapsto\mathrm{P}_{(1,1)}$}    \\ \hline
$\mathcal{X}_r\mapsto\psi^j_k$ &  \multicolumn{1}{l|}{$p(0|xy)$} & $p(1|xy)$ & \multicolumn{1}{l|}{$p(0|xy)$} & $p(1|xy)$ & \multicolumn{1}{l|}{$p(0|xy)$} & $p(1|xy)$ \\ \hline\hline
$\mathcal{X}_1\mapsto\psi^0_1$ & \multicolumn{1}{l|}{~~~~~$1$} & $0$ & \multicolumn{1}{l|}{~~~$1/2$} & $1/2$ & \multicolumn{1}{l|}{~~~$1/2$} & $1/2$  \\ \hline
$\mathcal{X}_2\mapsto\psi^1_1$ & \multicolumn{1}{l|}{~~~~~$0$} & $1$ & \multicolumn{1}{l|}{~~~$1/2$} & $1/2$ & \multicolumn{1}{l|}{~~~$1/2$} & $1/2$ \\ \hline
$\mathcal{X}_3\mapsto\psi^0_2$ & \multicolumn{1}{l|}{~~~$1/2$} & $1/2$ & \multicolumn{1}{l|}{~~~~~$1$} & $0$ & \multicolumn{1}{l|}{~~~$1/2$} & $1/2$ \\ \hline
$\mathcal{X}_4\mapsto\psi^1_2$ & \multicolumn{1}{l|}{~~~$1/2$} & $1/2$ & \multicolumn{1}{l|}{~~~~~$0$} & $1$ & \multicolumn{1}{l|}{~~~$1/2$} & $1/2$ \\ \hline
$\mathcal{X}_5\mapsto\psi^0_3$ & \multicolumn{1}{l|}{~~~$1/2$} & $1/2$ & \multicolumn{1}{l|}{~~~$1/2$} & $1/2$ & \multicolumn{1}{l|}{~~~~~$1$} & $0$ \\ \hline
$\mathcal{X}_6\mapsto\psi^1_3$ & \multicolumn{1}{l|}{~~~$1/2$} & $1/2$ & \multicolumn{1}{l|}{~~~$1/2$} & $1/2$ & \multicolumn{1}{l|}{~~~~~$0$} & $1$ \\ \hline
\end{tabular}
\caption{Input-output correlation \(p^{\max}({\mathcal{B'}|\mathcal{XY}})\). Alice communicates the pure stabilizer state \(\psi^j_k\) to Bob when her input belongs to the partition \(\mathcal{X}_r\), where \(r=2k+j-1\). Bob performs the MUB measurement \(\mathrm{P}_t\) if his input \(y\in\mathcal{Y}_t\), and obtains the outcome \(b\in\{0,1\}\).}
\label{taba1}
\end{table}\hspace{-.5cm}
\end{small}

Denoting \(\psi^j_k\) be the \(j^{\text{th}}\) eigenstate of the \(k^{\text{th}}\) MUB measurement \(\mathrm{P}_{k}\), with \(k=1\to(0,1),~\&~k\to(1,k-2)\) for \(k\ge2\), we have
\begin{align}\label{apb2}
\tr\left(\psi^j_k\psi^{j'}_{k'}\right)=\frac{1}{d}\left(1-\left(1-d\right)^{\delta_{jj'}}\delta_{kk'}\right).
\end{align}
In the maximal case, Alice partitions \(\mathcal{X}\) into \(d(d+1)\) disjoint sets \(\{\mathcal{X}_r\}_{r=1}^{d(d+1)}\), and encodes the input(s) in \(\mathcal{X}_r\) into the state \(\psi^j_k=\frac{1}{2}\left(\mathbf{I}_2+(-1)^j\mathrm{P}_{k}\right)\), where \(r=d(k-1)+j+1\) with \(j\in\{0,1,\cdots,d-1\},~k\in\{1,2,\cdots,d+1\}\). For the maximal case, \(t=d+1\), and Bob uses the measurements \(\{\mathrm{P}_{(0,1)},\mathrm{P}_{(1,0)},\mathrm{P}_{(1,1)},\cdots,\mathrm{P}_{(1,d-1)}\}\) measurements for his decoding. Without loss of any generality, considering \(\mathcal{B}'=\{0,1,\cdots,d-1\}\), the explicit form of the input-output correlation \(p^{\text{max}}({\mathcal{B'}|\mathcal{XY}})\) for \(d=2\) is provided in Table \ref{taba1} (for general \(d\) see Table \ref{taba2}). Notably, this correlation, within the stabilizer-limited qubit protocols, uses coherence in encoding and measurement incompatibility in  decoding, in a sense, in optimal way. 

We now show that the aforesaid correlation can be classically simulated using \(\log d\) bit of communication from Alice to Bob, along with the SR. We denote \(\log d\) bit of SR between Alice and Bob by \(\mathrm{R}(\lambda) = \tfrac{1}{d} \sum_{\lambda=0}^{d-1} \ket{\lambda}_A\bra{\lambda} \otimes \ket{\lambda}_B\bra{\lambda}\). To simulate this particular correlation, Alice and Bob share \(d+1\) bits of SR: \(\{\mathrm{R}(\lambda_k)\}_{k=1}^{d+1}\). Upon receiving the input \(x \in\mathcal{X}\), Alice decides in which partition \(\mathcal{X}_r\) it belongs to (where \(r=d(k-1)+j+1\)), and measures her share of the \(k^{\text{th}}\) shared bit. Upon obtaining the outcome \(\lambda_k\in\{0,1,\ldots,d-1\}\) she sends the message \(m = j \oplus_d(d- \lambda_k)\) to Bob, where \(\oplus_d\) is summation modulo \(d\). For the input \(y\in\mathcal{Y}\), Bob identifies the partition \(\mathcal{Y}_t\) in which it belongs, and measures his share of the \(t^{\text{th}}\) shared bit and outputs \(b = m \oplus_d \lambda_t\). A straightforward calculation confirms that this classical protocol exactly reproduces the correlation \(p^{\max}(\mathcal{B'}|\mathcal{X},\mathcal{Y})\). For the non-maximal case, some of the partitions \(\mathcal{X}_r\)'s become empty and/or \(t<d+1\); nevertheless, the same protocol remains applicable. The non-extremal correlations can be simulated with the aid of additional SR. This completes the proof.
\end{proof}
\begin{table*}[t]
\centering
\begin{small}
\renewcommand{\arraystretch}{1.6}
\setlength{\tabcolsep}{2.8pt}
\begin{tabular}{c|c|c|c|c|c|c|c|c|c|c|c|c|c|}
& \multicolumn{4}{c|}{$\mathcal{Y}_1 \mapsto \mathrm{P}_{(0,1)}$} 
& \multicolumn{4}{c|}{$\mathcal{Y}_2 \mapsto \mathrm{P}_{(1,0)}$} 
& $\cdots$
& \multicolumn{4}{c|}{$\mathcal{Y}_{d+1} \mapsto \mathrm{P}_{(1,d-1)}$} \\ \cline{2-14}
$\mathcal{X}_r \mapsto \psi_k^j$ 
& $p(0|xy)$ & $p(1|xy)$ & $\cdots$ & $p(d{-}1|xy)$ 
& $p(0|xy)$ & $p(1|xy)$ & $\cdots$ & $p(d{-}1|xy)$ 
& $\cdots$
& $p(0|xy)$ & $p(1|xy)$ & $\cdots$ & $p(d{-}1|xy)$ \\ \hline\hline

$\mathcal{X}_1 \mapsto \psi_1^0$ 
& 1 & 0 & $\cdots$ & 0 
& $\tfrac{1}{d}$ & $\tfrac{1}{d}$ & $\cdots$ & $\tfrac{1}{d}$ 
& $\cdots$
& $\tfrac{1}{d}$ & $\tfrac{1}{d}$ & $\cdots$ & $\tfrac{1}{d}$ \\ \hline

$\mathcal{X}_2 \mapsto \psi_1^1$ 
& 0 & 1 & $\cdots$ & 0 
& $\tfrac{1}{d}$ & $\tfrac{1}{d}$ & $\cdots$ & $\tfrac{1}{d}$ 
& $\cdots$
& $\tfrac{1}{d}$ & $\tfrac{1}{d}$ & $\cdots$ & $\tfrac{1}{d}$ \\ \hline

$\vdots$ 
& $\vdots$ & $\vdots$ & $\ddots$ & $\vdots$ 
& $\vdots$ & $\vdots$ & $\ddots$ & $\vdots$
& $\ddots$
& $\vdots$ & $\vdots$ & $\ddots$ & $\vdots$ \\ \hline

$\mathcal{X}_d \mapsto \psi^{d-1}_1$ 
& 0 & 0 & $\cdots$ & 1 
& $\tfrac{1}{d}$ & $\tfrac{1}{d}$ & $\cdots$ & $\tfrac{1}{d}$ 
& $\cdots$
& $\tfrac{1}{d}$ & $\tfrac{1}{d}$ & $\cdots$ & $\tfrac{1}{d}$ \\ \hline

$\mathcal{X}_{d+1} \mapsto \psi_2^0$ 
& $\tfrac{1}{d}$ & $\tfrac{1}{d}$ & $\cdots$ & $\tfrac{1}{d}$  
& 1 & 0 & $\cdots$ & 0 
& $\cdots$
&$\tfrac{1}{d}$ & $\tfrac{1}{d}$ & $\cdots$ & $\tfrac{1}{d}$ \\ \hline

$\vdots$ 
& $\vdots$ & $\vdots$ & $\ddots$ & $\vdots$ 
& $\vdots$ & $\vdots$ & $\ddots$ & $\vdots$ 
& $\ddots$
& $\vdots$ & $\vdots$ & $\ddots$ & $\vdots$ \\ \hline

$\mathcal{X}_{d(d+1)} \mapsto \psi_{d+1}^{d-1}$ 
& $\tfrac{1}{d}$ & $\tfrac{1}{d}$ & $\cdots$ & $\tfrac{1}{d}$ 
& $\tfrac{1}{d}$ & $\tfrac{1}{d}$ & $\cdots$ & $\tfrac{1}{d}$ 
& $\cdots$
& 0 & 0 & $\cdots$ & 1\\ \hline 
\end{tabular}
\end{small}
\caption{The input-output correlation \(p^{\max}({\mathcal{B'}|\mathcal{XY}})=\left\{p(b|x,y)\right\}\) for arbitrary prime \(d\). Alice communicates the pure stabilizer state \(\psi^j_k\) to Bob when her input belongs to the partition \(\mathcal{X}_r\), where \(r=d(k-1)+j+1\). Bob performs the MUB measurement \(\mathrm{P}_t\) if his input \(y\in\mathcal{Y}_t\), and obtains the outcome \(b\in\{0,1,\cdots,d-1\}\).}
\label{taba2}
\end{table*}

\noindent At this point it is worth recalling the result of Frenkel and Weiner, which establishes that $\mathbb{S}_{\mathrm{C}}^d[\mathcal{X}\to\mathcal{B}]= \mathbb{S}_{\mathrm{Q}}^d[\mathcal{X}\to\mathcal{B}]$ for all finite $\mathcal{X}$ and $\mathcal{B}$ \cite{Frenkel2015}. That is, in the absence of input at Bob’s end and with SR, any input-output correlation achievable with a $d$-dimensional quantum system can also be realized with a $d$-dimensional classical system (see also \cite{DallArno2017,Naik2022,Patra2023,Patra2024} for other related works). Theorem \ref{theo1} generalizes this no-go result to the communication complexity setting. Analogous to the Gottesman–Knill theorem,  which shows that quantum circuits composed of Clifford gates and restricted to Clifford operations are classically simulable, Theorem \ref{theo1} demonstrates that stabilizer-limited quantum protocols cannot yield any advantage over classical ones in communication complexity tasks when SR is available between the communicating parties.
\begin{figure}[t]
\centering
\includegraphics[width=1\linewidth]{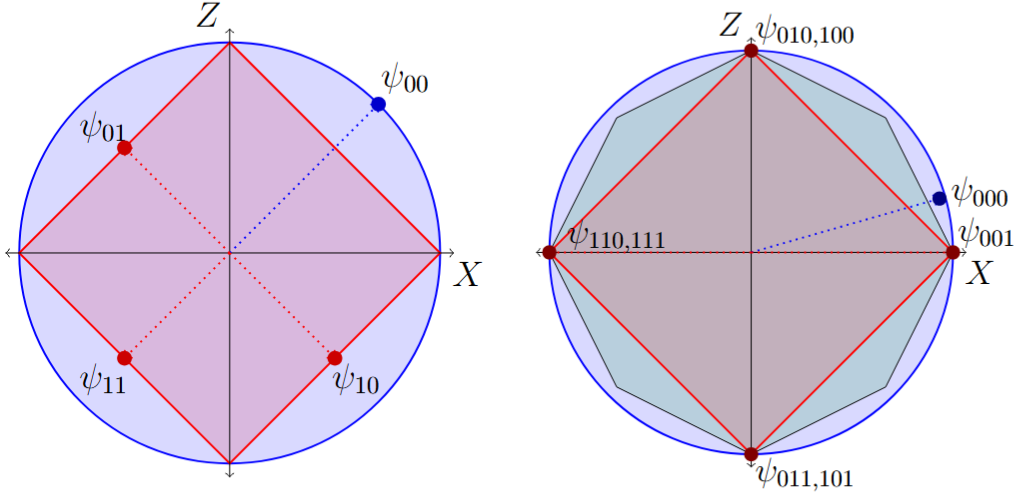}
\caption{(Color online) The XZ-plane of the Bloch sphere is shown. Stabilizer states lie within the red shaded square, while all other states in the plane are non-stabilizer. [{\bf Left}] In $2 \mapsto 1$ RAC, a quantum advantage is achieved when at least one encoding state (shown as a blue dot) lies outside the stabilizer set, even if all other encoding states (red dots) are stabilizer states. The specific encoding-decoding configuration illustrated here achieves a success probability of $\frac{1}{16}(11 + \sqrt{2})$. [{\bf Right}] In $3 \mapsto 1$ RAC, a quantum advantage is obtained only if the non-stabilizer encoding lies outside the green shaded octagonal region, with all other encodings chosen from the stabilizer set. This octagon is defined by the inequalities $|2n^x| + |n^z| \le 2$ and $|n^x| + |2n^z| \le 2$. Similar geometric constraints apply to the $2 \mapsto 1$ RAC task when non-stabilizer encodings outside the XZ-plane are considered, supporting the claim in Remark \ref{remark1}.
 }\label{fig2}
\end{figure}

Notably, Theorem \ref{theo1} hinges on the structure of mutually unbiased bases (MUBs) \cite{McNulty2024}. It holds whenever preparations and measurements are restricted to the convex hull of eigenstates of a chosen MUB, but fails for non-MUB sets. For example, states forming a regular octagon on a Bloch-sphere great circle reproduce the optimal qubit $2 \mapsto 1$ random access code, which requires more than one classical bit. On the other hand, when $d$ is a prime power, the tensor product of Heisenberg-Weyl operators readily gives $(d+1)$ MUBs, which is known to be optimal \cite{Bandyopadhyay2002}. Moreover, the algebraic structure of these constructions also identifies the set of free transformations underlying the generalized stabilizer sub-theory of the Gottesman–Knill theorem. However, for arbitrary $d$ (e.g. $d=6,10$), such an elegant construction of MUBs with the set of free transformation is elusive, even-though one can come up with set of sub-optimal MUBs. While our Theorem holds for such suboptimal MUBs, but without specifying the set of free transformation, the corresponding sub-theory lacks the physical motivation.
\subsection{Minimal Magic Resource for Quantum Advantage}\label{subsec3b}
\noindent A natural question is whether the presence of magic resources always guarantees a quantum advantage in communication complexity tasks. Our next result identifies a scenario where even a `minimal magic resource' suffices to achieve such an advantage. Before presenting our result, we note that various measures have been proposed to quantify magic in non-stabilizer states \cite{Veitch14,Veitch2012,Wang2020,Koukoulekidis2022}. Crucially, our result holds for any faithful measure. 
\begin{lemma}\label{theo2}
Arbitrarily small nonzero magic in any one of the encoding states ensures quantum advantage in the $2 \mapsto 1$ RAC task, even when all other encoding states are restricted to stabilizer set and all the decodings are limited to Clifford operations.
\end{lemma}
\begin{proof}
For the \(2\)-level \(2\mapsto1\) RAC, let Alice encode her bit string \({\bf x}=x_1x_2\in\{0,1\}^2\) on qubit state \(\psi_{{\bf x}}=\frac{1}{2}(\mathbf{I}_2+\vec{n}_{{\bf x}}\cdot\sigma)\), where \(\vec{n}_{{\bf x}}=(n^x_{\bf x},n^y_{\bf x},n^z_{\bf x})^{\mathrm{T}}\in\mathbb{R}^3\) denotes the Bloch vector of the encoding state. Let for the input \(y\in\{1,2\}\), Bob performs the stabilizer measurements \(\mathrm{P}_{y}\equiv\{\pi^0_{y},\pi^1_{y}\}\), where \(\pi^b_{y}:=\frac{1}{2}(\mathbf{I}_2+(-1)^b\mathrm{P}_y)\), with \(y=1\equiv(0,1)\), \(y=2\equiv(1,0)\), and \(b\in\{0,1\}\). This protocol yields a success \(P_{succ}:=\frac{1}{8}\sum_{{\bf x},y}p(b=x_y|{\bf x},y)=\frac{1}{8}\sum_{{\bf x},y}\tr[\psi_{\bf x}\pi^b_y]\), which explicitly reads as
\begin{align}
P_{succ}=\frac{1}{16}\left[8+\hspace{-.1cm}\sum_{x_1,x_2}\hspace{-.1cm}\left\{(-1)^{x_1}n^z_{\bf x}+(-1)^{x_2}n^x_{\bf x}\right\}\right].
\end{align}
When all encoding states are restricted to the stabilizer set $\mathrm{St}_2$, each term in curly brace within the summation attains a maximum value of $1$, leading to an optimal success probability of $P^{\mathrm{St}}_{\text{succ}} = 3/4$, which coincides with the optimal classical value as consistent with Theorem \ref{theo1}. Introducing a nonzero amount of magic $\epsilon > 0$ in only one encoding state increases the success probability to $P^{\epsilon}_{\text{succ}} = (12 + \epsilon)/16$. This completes the proof.
\end{proof}
An analogous result also hold for $3 \mapsto 1$ which we formalize in the following lemma,
\begin{lemma}\label{theo2E}
Arbitrarily small nonzero magic in any one of the encoding states ensures quantum advantage in the $3 \mapsto 1$ RAC task, even when all other encoding states are restricted to stabilizer set and all decodings are limited to Clifford operations. 
\end{lemma}
\begin{proof}
In $3 \mapsto 1$ RAC, Alice receives a random $3$-bit string $\textbf{x}=x_1x_2x_3\in\{0,1\}^3$, and Bob has to output $b=x_y\in\{0,1\}$ based on his random input $y\in\{1,2,3\}$. Likewise, \(2\mapsto1\) case, let Alice encode her inputs in qubit states of the form $\textbf{x}\to\psi_{\textbf{x}}=\frac{1}{2}(\mathbf{I}_2+\vec{n}_{\textbf{x}}\cdot\sigma)$. To optimize the success probability, Alice's encodings must be chosen properly depending on Bob's decoding strategies. Within Clifford-restricted scenario, we analyses three three different cases for Bob decoding.

{\bf I.} Irrespective of the input \(y\), Bob always performs a fixed decoding measurement (say \(\mathrm{P}_{(1,0)}\)). However, this strategy---when comprised with any encoding on Alice's side---cannot surpass the optimal classical success probability of \(3/4\). This limitation stems from the fact that achieving a quantum advantage in RAC tasks requires measurement incompatibility at the decoding stage \cite{Carmeli2020}, a feature absent in the above strategy. This also follows from the result of Frenkel and Weiner \cite{Frenkel2015}. Nonetheless, the encoding \(\textbf{x}\to\psi_{\textbf{x}}=\frac{1}{2}(\mathbf{I}_2+(-1)^{\text{Maj}(\textbf{x})} \mathrm{P}_{(1,0)})\) achieves the success probability \(3/4\), here \(\text{Maj}(\cdot)\) denotes the majority function \cite{Ambainis2024}.

{\bf II.} Bob performs a three different MUBs for three of his different inputs: 
\begin{align}
y\to &\mathrm{P}_{y}\equiv\{(\mathbf{I}_2+(-1)^b\mathrm{P}_{y})/2\}_{b=0}^1,~~y\in\{1,2,3\}\nonumber\\
&\text{with}~1\equiv(0,1),~2\equiv(1,0),~3\equiv(1,1).
\end{align}   
This protocol yields a success  
\small
\begin{align}
P_{succ}=\frac{1}{48}[24+\hspace{-.2cm}\sum_{x_1,x_2,x_3}\hspace{-.2cm}\{(-1)^{x_1}n^z_{\bf x}+(-1)^{x_2}n^x_{\bf x}+(-1)^{x_3}n^y_{\bf x}\}].
\end{align}
\normalsize
When all encoding states are restricted to the stabilizer set $\mathrm{St}_2$, each term in curly brace within the summation attains a maximum value of $1$, leading to an average success probability of $P_{\text{succ}} = 2/3$, a value strictly less than the classical optimal success \(3/4\). Furthermore, allowing magic in one state only, one can achieve a success almost \(\frac{1}{48}(31+\sqrt{3})<\frac{3}{4}\). With this encodings, an advantage over the classical protocols requires at least six encoding states to be non-stabilizer. However, magic in all the eight encoding states can achieve the optimal quantum success \(\frac{1}{2}(1+\frac{1}{\sqrt{3}})\). 

Thus, neither the strategy (a) nor (b) establishes our claim. We thus consider a third strategy as below.

{\bf III.} Bob performs one particular MUB measurement for two of the inputs and a different measurement for the remaining input, i.e., 
\begin{subequations}
\begin{align}
y\in\{1,2\}&\to\mathrm{P}_{(1,0)}\equiv\{(\mathbf{I}_2+(-1)^b \mathrm{P}_{(1,0)})/2\}_{b=0}^1,\\
y=3&\to\mathrm{P}_{(0,1)}\equiv\{(\mathbf{I}_2+(-1)^b \mathrm{P}_{(0,1)})/2\}_{b=0}^1.
\end{align}
\end{subequations}
In this case, the success probability reads as
\small
\begin{align}
&P_{succ}=\frac{1}{48}[24+\hspace{-.3cm}\sum_{x_1,x_2,x_3}\hspace{-.3cm}\{[(-1)^{x_1}+(-1)^{x_2}]n^x_{\bf x}+(-1)^{x_3}n^z_{\bf x}\}]\nonumber\\
&=\frac{1}{48}[24+\hspace{-.3cm}\sum_{x_1=x_2,x_3}\hspace{-.3cm}\{(-1)^{x_1}2n^x_{\bf x}+(-1)^{x_3}n^z_{\bf x}\}+\hspace{-.3cm}\sum_{x_1\neq x_2,x_3}\hspace{-.3cm}(-1)^{x_3}n^z_{\bf x}].  
\end{align}
\normalsize
When, all the encodings are restricted to \(\mathrm{St}_2\), each of the terms within curly braces in first summation can achieve maximum value \(2\), whereas each of the terms in second summation can achieve maximum value \(1\), thereby leading to the optimal classical success \(3/4\). If one of the encoding states contributing in the first summation is allowed to be a non-stabilizer state, then it is possible to surpass the classical optimal success (See Fig. \ref{fig2}). This completes the proof. 
\end{proof}

It is noteworthy and somewhat counterintuitive that in Case \textbf{II}, where Alice employs quantum but stabilizer-restricted encoding and Bob performs three distinct MUB measurements, the average success probability falls below that of the optimal classical strategy. The classical optimum is achieved by \emph{majority encoding with identity decoding} \cite{Ambainis2024}, where Bob’s action is input-independent and performance is determined solely by Alice’s encoding. By encoding the majority bit of each 3-bit string, Alice achieves success probability $2/3$ for six strings and $1$ for `$000$' and `$111$'. In contrast, the quantum strategy with three MUBs ($X,Y,Z$) requires input-dependent measurements by Bob: any pure encoding yields certainty in one basis but only random outcomes ($1/2$) in the other two, resulting in an average success probability of $2/3$ for all eight strings. Thus, the MUB-based quantum strategy performs strictly worse than the classical optimum, a limitation arising from the intrinsic randomness imposed by mutual unbiased-ness.

At this point a crucial difference in Lemma \ref{theo2} and Lemma \ref{theo2E} is worth mentioning. Notably in case of $2\mapsto 1$ RAC, by considering only one magic state for encoding one can obtain the optimal success $\frac{1}{16}(11 + \sqrt{2})$ [the encodings are depicted in Fig. \ref{fig2}], while unrestricted encodings achieve the optimal quantum success of $\frac{1}{2}(1+\frac{1}{\sqrt{2}})$. Strikingly, in $3 \mapsto 1$ RAC, only one magic state encoding does not yield any advantage when three distinct non-commuting stabilizer-preserving measurements are used for Bob’s three inputs. In contrast, an advantage emerges when only two such measurements are employed, with one reused for two different inputs. Further to note that, Lemma \ref{theo2} and \ref{theo2E}, as specified in the following remark, requires a careful reading.
\begin{remark}\label{remark1}
Although Lemma \ref{theo2} and Lemma \ref{theo2E} establish that only one magic state encoding can lead to an advantage in RACs, not all of the non-stabilizer states necessarily yield such an advantage (see Fig.\ref{fig2}).    
\end{remark}

We can generalize Lemma \ref{theo2} and Lemma \ref{theo2E} for arbitrary 2-level $N\mapsto1$ RAC tasks as depicted in the following theorem.
\begin{theorem}\label{theo2s}
Arbitrarily small nonzero magic in any one of the encoding states ensures quantum advantage in all $N \mapsto 1$ RAC tasks for $N\in\mathbb{N}$, even when all other encoding states are restricted to stabilizer set and all decodings are limited to Clifford operations. 
\end{theorem}
\begin{proof}
Let Alice's input string be $\textbf{x}=x_1x_2\cdots x_N\in\{0,1\}^N$ and Bob's input is $y\in\{1,2,\cdots,N\}$. We analyze the two cases $N=\text{odd}$ and $N=\text{even}$ separately.  \\

\noindent \textbf{Case I ($N=\text{odd}$):}
Let $N=2n+1$, where $n\in\mathbb{N}$. Alice considers her encoding strategy as follows: she first check the majority (`Maj') of the first $2n$ bits. In case of unique majority she encodes the string $\textbf{x}$ into the state 
\begin{align}\label{maj}
\textbf{x}\to\psi_{\textbf{x}}=\frac{1}{2}\left(\mathbf{I}_2+(-1)^{\eta} P_{(0,1)}\right),
\end{align}
where $\eta=\text{Maj}(x_1x_2\cdots x_{2n})\in\{0,1\}$. If there is no unique majority, then her encoding depends on the $(2n+1){th}$ bit as follows 
\begin{align}\textbf{x}\to\psi_{\textbf{x}}=\frac{1}{2}\left(\mathbf{I}_2+(-1)^{x_{(2n+1)}} P_{(1,0)}\right).
\end{align}
For a better clarity of this encoding strategy, consider $n=2$. Then Alice's encodings read as:
\begin{align}
\left.\begin{aligned}
&00000\to\ket{0}\bra{0},\quad00111\to\ket{-}\bra{-},\\
&00001\to\ket{0}\bra{0},\quad 11110\to\ket{1}\bra{1},\\
&\hspace{1.2cm} \vdots\hspace{3.8cm} \vdots\\
&00110\to\ket{+}\bra{+},~11111\to\ket{1}\bra{1}
\end{aligned}\right\}    
\end{align}
Bob employs the decoding strategy $M_y\equiv\{\pi^b_y\}_{b=0}^1$, with 
\begin{subequations}
\begin{align}
&\pi^b_y:=\frac{1}{2}\left(\mathbf{I}_2+(-1)^b P_{(0,1)}\right),~\text{for}~y\in\{1,\cdots,2n\},\\
&\pi^b_y:=\frac{1}{2}\left(\mathbf{I}_2+(-1)^b P_{(1,0)}\right),~~\text{for}~~y=(2n+1).
\end{align}\label{decod}
\end{subequations}
Recall that the optimal classical success of $N\mapsto1$ RAC with $1$-bit of communication is achieved with the `majority encoding and identity decoding' (MEID) strategy \cite{Ambainis2024}. We will now argue that the aforementioned quantum strategy, which from now on we will refer to as the odd non-magic quantum (ONMQ) strategy, also achieve the same success.
\begin{itemize}
\item Consider the strings that have a unique majority in first $2n$ bits, with exactly $(n+k)$ bits having the value $0(1)$, where $1\le k\le n$. For such strings, success with classical MEID strategy and ONMQ strategy is listed in Table \ref{tab1}.
\begin{table}[ht]
\centering
\begin{tabular}{c||c|c|}
$x_{2n+1}$& MEID & ONMQ\\\hline
$1(0)$ & $(n+k)/(2n+1)$ &$(n+k+1/2)/(2n+1)$\\\hline
$0(1)$& ~~$(n+k+1)/(2n+1)$~~ & ~~$(n+k+1/2)/(2n+1)$~~\\\hline
\end{tabular}
\caption{Comparison of success probability of $(2n+1)\mapsto1$ RAC under classical MEID and ONMQ strategies, when Stings having unique majority in first $2n$ bits.}
\label{tab1}
\end{table}

\item Consider the strings where same number of $0$'s and $1$'s are there in first $2n$ bits. For all such strings, both the MEID strategy and the ONMQ strategy yield a success probability $P_{Succ}:=(n+1)/(2n+1)$
\end{itemize}
It thus turns out that the average success probability with theses two strategies will be same. However, the ONMQ strategy now helps us to argue that by allowing magic into one of the encoding states we can surpass the optimal classical success. For that, let one of the string (say $\textbf{x}^\star$) having $(n+k)$ number of $0$'s is encoded in quantum state 
\begin{align}
\phi_{\textbf{x}^\star}=\frac{1}{2}(\mathbb{I}_2+\vec{r}.\sigma),~\text{where}~|\vec{r}|\le1. 
\end{align}
The decoding in Eq.(\ref{decod}), for this particular string, yields a success
\begin{align}
P_{\textbf{x}^\star}&=\frac{1}{2}(1+r_z)(n+k)+\frac{1}{2}(1-r_z)(n-k)\nonumber\\  &\quad ~~+\frac{1}{2}(1+(-1)^{x^{\star}_{2n+1}}r_x)\nonumber\\
&=\frac{1}{2}+n+2k r_z+(-1)^{x^{\star}_{2n+1}}r_x.
\end{align}
In ONMQ encoding we have chosen $r_z=1$ and $r_x=0$ leading to a success
\begin{align}
P^{ONMQ}_{\textbf{x}^\star}=\frac{1}{2}+n+2k.   
\end{align}
Now, for all $k>1$ one can choose a suitable magic state $\phi_{\textbf{x}^\star}$ such that $2k r_z+(-1)^{x^{\star}_{2n+1}}r_x>2k$, thereby surpassing the optimal classical success.  

\noindent\textbf{Case II ($N=\text{even}$):} In case of $N=2n$, where $n\in\mathbb{N}$, Alice considers her encoding strategy as follows: she first check the majority (`Maj') of the first $2n-1$ bits. Let $\eta=\text{Maj}(x_1x_2\cdots x_{2n-1})\in\{0,1\}$, then she encodes the string $\textbf{x}$ into the state 
\begin{align}\label{majeven}
\textbf{x}\to\psi_{\textbf{x}}=\frac{1}{2}\left(\mathbf{I}_2+(-1)^{\eta} P_{(0,1)}\right).
\end{align}
We will consider the encoding strategy for $n=2$ as an example of Alice's encoding:
\begin{align}
\left.\begin{aligned}
&0000\to\ket{0}\bra{0},\quad1000\to\ket{0}\bra{0},\\
&0001\to\ket{0}\bra{0},\quad 1001\to\ket{0}\bra{0},\\
&\hspace{1.2cm} \vdots\hspace{3.2cm} \vdots\\
&0111\to\ket{1}\bra{1},~~~~1111\to\ket{1}\bra{1}
\end{aligned}\right\}    
\end{align}
Bob employs the same decoding strategy of odd $N$ (Eq.(\ref{decod})) $M_y\equiv\{\pi^b_y\}_{b=0}^1$, with 
\begin{subequations}
\begin{align}
&\pi^b_y:=\frac{1}{2}\left(\mathbf{I}_2+(-1)^b P_{(0,1)}\right),\\&~~\qquad\qquad~~\text{for}~~y\in\{1,2,\cdots,(2n-1)\}\nonumber,\\
\text{and}~~~&\pi^b_y:=\frac{1}{2}\left(\mathbf{I}_2+(-1)^b P_{(1,0)}\right),~~\text{for}~~y=2n.
\end{align}\label{decodeven}
\end{subequations}
This quantum encoding and decoding strategy from now on will be referred as even non-magic quantum strategy (ENMQ). We will establish that ENMQ and classical optimal strategy majority encoding identity decoding (MEID) will give the same success probability.
Lets consider a string of length $2n$, with exactly $(n+k)$  bits of first $(2n-1)$ bits have the value $0(1)$, where $0\leq k\leq n-1$. For such strings, the success with classical MEID strategy and ENMQ is listed below in Table \ref{tab2}.
\begin{table}[ht]
\centering
\begin{tabular}{c||c|c|}
$x_{2n+1}$& MEID & ENMQ\\\hline
$\bar{\eta}$ & $(n+k)/2n$ &$(n+k+1/2)/2n$\\\hline
$\eta$& ~~$(n+k+1)/2n$~~ & ~~$(n+k+1/2)/2n$~~\\\hline
\end{tabular}
\caption{Comparison of success probability of $2n$ bit string under classical MEID and ENMQ strategies, where $\eta=\text{Maj}(x_1,x_2,\cdots,x_{2n})\in\{0,1\}$.}
\label{tab2}
\end{table}
As the number of $2n$ bitstrings with $x_{2n} = \eta$ exactly matches the number of bitstrings with $x_{2n} \neq \eta$, the average success probability in both strategies turns out to be the same. However, the NMQ strategy now helps us to argue that by allowing magic into one of the encoding states, we can surpass the optimal classical success. For that, let one of the string (say $\textbf{x}^\star$) having $(n+k)$ number of $0$'s in the first $2n-1$ bit is encoded in quantum state 
\begin{align}
\phi_{\textbf{x}^\star}=\frac{1}{2}(\mathbb{I}_2+\vec{r}.\sigma),~\text{where}~|\vec{r}|\le1. 
\end{align}
The decoding in Eq.(\ref{decodeven}), for this particular string, yields a success
\begin{align}
P_{\textbf{x}^\star}&=\frac{1}{2}(1+r_z)(n+k)+\frac{1}{2}(1-r_z)(n-k-1)\nonumber\\&\qquad~~+\frac{1}{2}(1+(-1)^{x^{\star}_{2n}}r_x)\nonumber\\
&=n+(2k+1) r_z+(-1)^{x^{\star}_{2n}}r_x.
\end{align}
In ENMQ encoding we have chosen $r_z=1$ and $r_x=0$ leading to a success
\begin{align}
P^{ENMQ}_{\textbf{x}^\star}=n+2k+1.   
\end{align}
Now, for all $k\geq 0$ one can choose a suitable magic state $\phi_{\textbf{x}^\star}$ such that $(2k+1) r_z+(-1)^{x^{\star}_{2n+1}}r_x>2k+1$, thereby surpassing the optimal classical success.

\noindent This completes the proof.
\end{proof}
\subsection{Magic Resources and Exponential Quantum Advantage}\label{subsec3c}
In the preceding sections, we established that, in prime dimensions, stabilizer resources offer no advantage over classical strategies in one-way communication complexity. Interestingly, it is known that quantum resources can be exponentially advantageous over the classical resources in one-way communication complexity \cite{Raz1999, BarYossef2004, BarYossef2008, Gavinsky2006, Havlek2020}. Our Theorem~\ref{theo1} implies that quantum strategies exhibiting such an exponential advantage must employ the magic resource in the protocol. Naturally the question arises how much magic resources are indeed required to achieve such an exponential advantage. In what follows, we quantify the minimum amount of magic required to realize this advantage. To this end, we begin by formally defining the operational setting in which exponential quantum advantage is assessed.

\begin{definition} \label{def5}
[Communication Problem] A communication problem (or simply problem) $\mathcal{P}$ is defined by an infinite family of relations $\mathcal{P} := \{R_n\}_{n \ge 1}$, where $R_n \subseteq X_n \times Y_n \times Z_n$. $X_n = \{0, 1\}^n$ and $Y_n = \{0, 1\}^n$ are the input sets for Alice and Bob and $Z_n$ is the set of possible outputs. A pair $(x, y) \in X_n \times Y_n$ is a valid input if there exists at least one $z \in Z_n$ such that $(x, y, z) \in R_n$. If all $(x, y) \in X_n \times Y_n$ is a valid input for all $n\in \mathbb{N}$, then the problem $\mathcal{P}$ is called a `total' problem otherwise it is called a `promise' problem.
\end{definition}
We say, a problem $\mathcal{P}$ is solvable within an error $\epsilon< 1/2$ by a quantum (classical) shared strategy via communicating $\mathcal{Q}_\epsilon(n)$ ($\mathcal{C}_\epsilon(n)$) qudits (cdits) if for all $n$ and all valid inputs $(x, y)$, the output $b$ of the protocol satisfies: 
\begin{align*}
\sum_{b: (x,y,b)\in R_n}p(b| x,y)&\ge 1-\epsilon
\end{align*}
With this, we can now define exponential quantum advantage for a problem in the following way:
\begin{definition}\label{def6}
    [Exponential Quantum Advantage] A communication problem $\mathcal{P}$ is said to exhibit exponential quantum advantage if there exist one quantum shared strategy requiring $\mathcal{Q}_{\epsilon}(n)$ amount of qudit communication to solve $\mathcal{P}$ within an error $\epsilon$ such that
    \[
    \mathcal{C}_{\epsilon}(n) \ge d^{\Omega(\mathcal{Q}_{\epsilon}(n))}
    \]
    Where, $\mathcal{C}_{\epsilon}(n)$ is the amount of cdit communication required for the optimal classical shared strategy to solve the problem $\mathcal{P}$ within an error $\epsilon$.
\end{definition}
Our question of interest is how much magic resources are needed in a quantum strategy that achieve an exponential advantage in any such problem. We start by noting down following two important observations: 
\begin{observation}\label{obs2}
Any quantum strategy that admits an exponential advantage in some communication problem, must use some magic states in the encoding step, even when there in no limit on the amount of magic used during the decoding step.  
\end{observation}
This observation follows from the Gottesman--Knill theorem \cite{Gottesman1998} (see also \cite{karanjai2018contextuality}) together with a defining structural feature of the communication complexity model, namely that the communicating parties are endowed with unbounded local computational power. In this setting, the operational cost of a simulation is quantified solely by the classical channel capacity required to transmit the instructions specifying the protocol, and not by the time or computational resources needed to reproduce quantum statistics locally.

In particular, when the encoding stage is restricted to stabilizer states, the entire quantum protocol admits an efficient classical emulation. The set of extreme stabilizer states on $\mathcal{Q}(n)$ qudits can be specified using only $\Omega(\mathcal{Q}(n)^2)$ classical dits, implying that the quantum communication can be replaced by a classical message of quadratic size. Upon receiving this classical description, Bob can deterministically reproduce the operational predictions of the protocol by using his unbounded local computational power to analytically evaluating the Born rule probabilities $p(b|x,y)=\operatorname{Tr}(\Pi_y \rho_x)$. Thus, in the absence of non--stabilizer resources, the correlations generated by the protocol can be simulated by a purely classical strategy with at most polynomial overhead in communication.
\begin{observation}\label{obs3}
Usage of magic resources are not always necessary at the decoding step to ensure an exponential quantum advantage in communication problem.
\end{observation}
The claim follows through an explicit example. Particularly, the well known Hidden Matching problem \cite{BarYossef2004, BarYossef2008}, which exhibits such exponential advantage, employs only stabilizer resources at the decoding step. Our next theorem provides a lower bound on the amount of magic resources needed at the encoding step to ensure an exponential advantage. 
\begin{theorem}\label{theorem3}
Consider a one-way communication problem $\mathcal{P}$ that admits an exponential quantum advantage achieved by some quantum strategy. Then the number of inputs encoded into magic states in this protocol must grow doubly exponentially with the qudit communication cost of the strategy. In particular, denoting the number of magic states used in the protocol to be $m(n)$, we have 
\begin{align}
m(n) \ge d^{d^{\Omega\left(\mathcal{Q}_{\epsilon}(n)\right)}}.
\end{align}
\end{theorem}

\begin{proof}
The proof proceeds by demonstrating that if $m(n)$ is bounded by any exponential function of the form $d^{O\left(\mathcal{Q}_{\epsilon}^{\,\delta}(n)\right)}$ for a constant $\delta > 0$, the quantum protocol admits a classical simulation that exactly reproduces all achievable input--output correlations $p(b|x,y)$ with only polynomially larger communication cost.

To establish the upper bound, we observe that the $(d^{n}-m(n))$ inputs encoded into stabilizer states can be classically specified using at most polynomially many cdits relative to the quantum communication cost $\mathcal{Q}_{\epsilon}(n)$. By the qudit generalization of the Gottesman-Knill theorem, the efficient classical description of these states allows the stabilizer component of the protocol to be transmitted with a polynomial overhead of $O(\mathcal{Q}_{\epsilon}^2(n))$ cdits.

For the remaining $m(n)$ inputs encoded into magic states, the protocol and the finite set of possible resource states are known \textit{a priori} to both parties. Alice therefore only needs to transmit a classical index uniquely identifying her chosen state. The communication cost to send this label scales logarithmically with the number of magic states, requiring exactly $\lceil \log_d m(n) \rceil$ cdits. 

In either of the case, upon receiving the classical information, Bob gets to know exactly what state Alice intended to send. Bob then utilizes his unrestricted local computational resources to simulate the statistics $p(b|x,y) = \text{Tr}(\Pi_y \rho_x)$ associated with the corresponding measurement. 

Combining these observations, the total classical communication required to simulate the quantum protocol scales as $\max\{O(\mathcal{Q}_{\epsilon}^2(n)), O(\log_d m(n))\}$. If $m(n) \le d^{O\left(\mathcal{Q}_{\epsilon}^{\,\delta}(n)\right)}$, the cost to transmit the index is at most $O(\mathcal{Q}_{\epsilon}^{\,\delta}(n))$ cdits, which remains polynomial in $\mathcal{Q}_{\epsilon}(n)$. 

Therefore, to force the classical communication cost to scale exponentially as $d^{\Omega(\mathcal{Q}_{\epsilon}(n))}$ and preserve the exponential quantum advantage, the number of magic-encoded inputs must scale doubly exponentially, yielding $m(n) \ge d^{d^{\Omega\left(\mathcal{Q}_{\epsilon}(n)\right)}}$.
\end{proof}
Theorem~\ref{theorem3} has a particularly strong implication for communication tasks satisfying 
\begin{align}
\mathcal{C}(n)=O(n), \qquad \mathcal{Q}(n)=O(\log_d n).  
\end{align}
In these cases, although an exponential quantum advantage is achievable, such an advantage can arise only if essentially all \(d^n\) possible inputs are encoded into magic states. Equivalently, any exponential separation necessarily requires an exponential amount of magic-state encoding. In the qubit case (\(d=2\)), the Hidden Matching problem provides a canonical example exhibiting this behaviour. 

\section{Discussions}\label{sec4} Building on the seminal Gottesman–Knill theorem, the importance of non-stabilizer resources (magic) has been extensively studied in the context of quantum computational advantage, classical simulation bounds \cite{Gottesman99,Howard2014,Bravyi2016,Howard2017,Bravyi2018,Bu2019,Gu2025}, quantum homomorphic encryption \cite{Broadbent2015}, quantum machine learning \cite{Hinsche2023}, and quantum channel capacities \cite{Bu2025}. Adding to this line of work, our Theorem \ref{theo1} identifies magic resources as the necessary ingredient for achieving quantum advantage in one-way communication complexity tasks. This result also extends the Frenkel–Weiner no-go theorem \cite{Frenkel2015} to the communication complexity setting. On the other hand, Lemma \ref{theo2} \& \ref{theo2E} and Theorem \ref{theo2s} show that even a minimal magic resource in the encoding suffices to outperform classical protocols in certain communication complexity tasks. This, within prepare-and-measure scenario, opens an experimental pathway for certifying the presence of magic resources. Going beyond these observations, our Theorem \ref{theorem3} shows that achieving an exponential quantum advantage in a communication task necessarily requires an exponential amount of magic in the encoding stage. This establishes a fundamental lower bound on the non-stabilizer resources required for any quantum protocol that exhibits an exponential separation from classical communication.

Our work also opens several interesting directions for future research. Firstly, exploring the status of Theorem \ref{theo1} in communication complexity tasks involving multiple distant parties remains an open problem. Additionally, Theorem \ref{theo2s} focuses on $2$-level RACs only. Extending this analysis to higher-level RACs and, more broadly, to other classes of communication complexity tasks could yield further valuable insights. Finally, while Theorem~\ref{theorem3} characterizes the minimum fraction of inputs that must be encoded into magic states to realize an exponential quantum advantage, identifying concrete communication tasks that achieve such separations with minimal magic consumption per input remains an important direction for future investigation.

\bigskip
\noindent {\bf Note added:} After completing our study, we became aware of an independent line of work \cite{Zamora2025} which studies semi-device-independent techniques within Prepare-and-Measure framework to certify non-stabilizer quantum states. 
\bigskip
\begin{acknowledgments}
\noindent{\bf Acknowledgement:} SRC acknowledges support from University Grants Commission, India (reference no. 211610113404). SGN acknowledges support from the CSIR project $09/0575(15951)/2022$-EMR-I. RKP acknowledges support from the EU (CHIST-ERA MoDIC) and from the National Research, Development and Innovation Office NKFIH (No. $2023-1.2.1-$ERA\_NET$-2023-00009$). SBG, MB, and AGM acknowledge the financial support through the National Quantum Mission (NQM) of the Department of Science and Technology, Government of India. PG acknowledges financial support from the project entitled “Technology Vertical - Quantum Communication” under the National Quantum Mission (NQM) of the Department of Science and Technology (DST) (Sanction Order No. DST/QTC/NQM/QComm/2024/2 (G)).
\end{acknowledgments}


%


\end{document}